\documentclass[11pt]{article}
\usepackage{graphicx,comment}
\usepackage{geometry}[1in]
\usepackage{amsthm,amsmath,amssymb}
\usepackage{hyperref,color}
\usepackage[capitalize,nameinlink]{cleveref}
\usepackage[dvipsnames]{xcolor}
\hypersetup{
	colorlinks=true,
	pdfpagemode=UseNone,
    citecolor=OliveGreen,
    linkcolor=NavyBlue,
    urlcolor=Magenta,
	pdfstartview=FitW
}
\usepackage{appendix}

\newtheorem{theorem}{Theorem}
\newtheorem{definition}[theorem]{Definition}
\newtheorem{lemma}[theorem]{Lemma}

\newcommand{\emm}{\mathrm{e}}
\newcommand{\eps}{\varepsilon}

\newcommand\norm[1]{\left\|{#1}\right\|}

\newcommand{\pl}{\textup{\texttt{+}}}
\newcommand{\mi}{\textup{\texttt{-}}}
\newcommand{\plm}{\textup{\texttt{\textpm}}}

 \author{
Andreas Galanis\thanks{Department of Computer Science, University of Oxford. Email:
   \texttt{andreas.galanis@cs.ox.ac.uk}.
 }
 \and
 Daniel \v{S}tefankovi\v{c}\thanks{Department of Computer Science, University of Rochester. Email: \texttt{stefanko@cs.rochester.edu}. Research supported in part by NSF grant CCF-1563757.}
 \and 
 Eric Vigoda\thanks{Department of Computer Science, University of California, Santa Barbara. Email: \texttt{vigoda@ucsb.edu}. Research supported in part by NSF grant CCF-2147094.}
 }
\title{Critical window for approximate counting\\ in dense Ising models}
\date{\today}

\begin{document}
\maketitle
\begin{abstract}
We study the complexity of approximating the partition function of dense Ising models in the critical regime. Recent work of Chen, Chen, Yin, and Zhang (FOCS 2025) established fast mixing  at criticality, and even beyond criticality  in a window of width $N^{-1/2}$. We complement these algorithmic results by proving nearly tight hardness bounds, thus yielding the first instance of a sharp scaling window for the computational complexity of approximate counting.

Specifically, for the dense Ising model we show that approximating the partition function is computationally hard within a window of width \(N^{-1/2+\varepsilon}\) for any constant \(\varepsilon>0\). Standard hardness reductions for non-critical regimes break down at criticality due to bigger fluctuations in the underlying gadgets, leading to suboptimal bounds. We overcome this barrier via a global approach which aggregates fluctuations across all gadgets rather than requiring tight concentration guarantees for each individually. This new approach yields the optimal exponent for the critical window.\let\thefootnote\relax\footnotetext{For the purpose of Open Access, the authors have applied a CC BY public copyright licence to any Author Accepted Manuscript version arising from this submission. All data is provided in full in the results section of this paper.}
\end{abstract}
\section{Introduction}

Phase transitions are often accompanied by \emph{critical windows}: narrow parameter regimes around a critical point in which the global behavior of the system changes rapidly as a function of the system size. The delicate behavior inside such windows has been a key challenge in probability and statistics, and has recently emerged as a new frontier in sampling and approximate counting. In this paper, we study critical windows for the Ising model, focusing on their implications for computational complexity.

Given a symmetric interaction matrix \(J \in \mathbb{R}^{N \times N}\), the Ising model is the probability distribution \(\mu_J\)  defined by $\mu_J(\sigma) \propto \exp\big(\tfrac{1}{2}\sigma^\top J \sigma\big) \mbox{ for all } \sigma\in \{\mi1,\pl1\}^N$. We will refer to the entries of $J$ as  weights. The normalizing constant $Z_J = \sum_{\sigma \in \{\pl1,\mi1\}^N} \exp\!\left(\frac{1}{2}\sigma^\top J \sigma\right)$
is called the partition function.  Typically one takes $J$ to be the adjacency matrix of a graph $G$ scaled by some parameter, and in this case $\mu_J$ can also be viewed as a weighted distribution on the cuts of the graph. One of the most well-studied cases is the $N$-vertex complete graph where each edge has weight $\frac{\beta}{N}$ and $\beta>0$ is known as the inverse temperature. We will focus on the setting where $J$ is positive semidefinite (psd) and its spectral norm is bounded by some constant, specifically  when  $\norm{J}_2\approx 1$ which captures criticality for dense graphs.\footnote{\label{fn:psd}Note that by adding to $J$ a scalar multiple of the identity matrix, the Ising distribution does not change;  so having $J$ to be psd and $\norm{J}_2\approx 1$  is equivalent to considering matrices $J$ whose spectral diameter (the difference between the largest and smallest eigenvalues of $J$) is roughly equal to 1.} 

In the case of the complete graph, it is well-known that the model undergoes a phase transition as $N$ grows large at the critical point $\beta_c=1$. For a configuration $\sigma\in \{\mi 1,\pl 1\}^N$, let   $m_\sigma=\sum_{i\in [N]} \sigma_i$ be the so-called magnetization of $\sigma$. When $\beta<\beta_c$, entropy dominates and correlations are weak, so a typical sample $\sigma\sim\mu_J$ is roughly balanced between $\plm 1$ spins and satisfies $|m_\sigma|=O( N^{1/2})$. For $\beta>\beta_c$,  interactions dominate and long-range order emerges in the sense that a typical $\sigma\sim \mu_J$ is biased towards either the ``$\pl$ phase'' or the ``$\mi$ phase'' where  $|m_\sigma|=\Theta(N)$. So, away from criticality, the underlying system behaves predictably. Around criticality however, for $\beta=1\pm O(\tfrac{1}{N^{1/2}})$, the above high and low temperatures principles no longer apply cleanly; instead they lead to a conflating behaviour which exhibits nontrivial scaling:  a typical sample $\sigma\sim \mu_J$ has  still a somewhat balanced number of $\plm 1$ but $|m_\sigma|=O( N^{3/4})$, i.e., the fluctuations of the spins are of order $N^{3/4}$. 

Understanding critical-window regimes has been a key challenge in  approximate counting/sampling, where  there is a vast amount of results for non-critical cases but only few that apply to criticality, especially for more general classes of graphs. In terms of Markov chains, it has been shown on the complete graph \cite{glaubera, glauberb} that the standard Glauber dynamics has mixing time $O(N\log N)$ for $\beta<1$, $O(N^{3/2})$ for $\beta=1$, and $\emm^{\Omega(N)}$ for $\beta>1$, see also~\cite{long2014power, cuff2012glauber,GSV-SwendsenWang,blanca2015dynamics,blanca2025mean,cutoffSW,helmuth2023finite} for related results. Similar flavored but more delicate-to-prove Markov-chain results have also been developed in the 2D lattice~\cite{martinelli19942,lubetzky2012critical,gheissari2018mixing}.

From an algorithmic/complexity viewpoint, progress for the case of a general interaction matrix $J$ has only been made very recently, using spectral independence theory~\cite{ALO20}.   In particular, for the case of the Ising model with general interaction matrix $J$, \cite{Ja,Jb} showed that the Glauber dynamics mixes rapidly when $\norm{J}_2<1$, whereas \cite{Inapproxa, Inapproxb} showed hardness of approximately sampling/counting  when $\norm{J}_2>1$. Note that the condition $\norm{J}_2=1$ translates precisely to the critical value $\beta_c=1$ for the complete graph. An analogous line of techniques has been developed for sparse Ising models (e.g. on graphs of bounded degree $d$), where the qualitative picture is analogous but the relevant phase transitions are more closely related to regular trees and random regular graphs. 

Despite these developments for non-critical regimes, understanding the complexity at criticality appeared to still be elusive. Recent work of Chen, Chen, Yin and Zhang \cite{CCYZ-critical-hard-core} however made a major leap forward, showing fast mixing for the Glauber dynamics when  $\|J\|_2=1$, and showed similar results for the Ising and hard-core model in sparse graphs. For the dense setting in particular that we consider here, they managed to go even beyond criticality and showed that, for any fixed constant $\alpha>0$, the mixing time is $O(N^{3/2}\log{N})$ when $\|J\|_2 \leq 1 + \frac{\alpha}{N^{1/2}}$. 

We prove a nearly-matching hardness result, thereby establishing the first example of a sharp scaling window for the computational complexity of the approximation problem.

\begin{theorem}\label{thm:Ising}
Fix any arbitrarily small constant $\varepsilon>0$. 
It is NP-hard to approximate the Ising partition function on $N\times N$ psd matrices $J$ when $\|J\|_2 \leq  1 + \frac{1}{N^{1/2-\varepsilon}}$, even within a factor $\exp(N^c)$ where $c=c(\varepsilon)>0$ is a constant depending only on $\varepsilon$.
\end{theorem}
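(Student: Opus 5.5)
We reduce from MAX-CUT on bounded-degree graphs, which is NP-hard to approximate within some constant factor; this is the usual gadget reduction for the supercritical regime. Take an $n$-vertex graph $H$ and replace each vertex $v$ by a block $V_v$ of $K$ spins. Each block carries a mean-field interaction $\frac{\beta}{K}\mathbf{1}\mathbf{1}^\top$ with $\beta=1+\delta$ slightly above criticality, and $\delta\approx K^{-1/2+\varepsilon'}$. The blocks of adjacent vertices are coupled by a weak antiferromagnetic rank-one term $-\frac{\gamma}{K}\mathbf{1}_{V_u}\mathbf{1}_{V_v}^\top$. Then $J$ has the form $\frac{1}{K}(B\otimes \mathbf{1}\mathbf{1}^\top)$ with $B=\beta I-\gamma A_H$. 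We shift by a multiple of the identity (cf.\ footnote~\ref{fn:psd}) so that $J$ is psd with spectral diameter $\beta+\gamma\cdot 2\Delta$. With $N=nK$ and $K=n^{C}$ for a large constant $C$, choosing $\gamma\approx\delta$ gives $\|J\|_2\le 1+O(\delta)\le 1+N^{-1/2+\varepsilon}$. The main constraint is that $\delta$ must dominate $N^{-1/2}$, not merely $K^{-1/2}$. This is why $K$ must be polynomially large in $n$ while $\delta$ stays $K^{-1/2+\varepsilon'}$. In fact $N^{-1/2+\varepsilon}=(nK)^{-1/2+\varepsilon}\ge K^{-1/2+\varepsilon'}$ once $C$ is large relative to $1/\varepsilon$.

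The partition function depends on $\sigma$ only through the block magnetizations $m_v$. Write it as
\[
Z_J=\sum_{(m_v)}\prod_v \binom{K}{(K+m_v)/2}\exp\Big(\tfrac{\beta}{2K}\sum_v m_v^2-\tfrac{\gamma}{K}\sum_{uv\in E(H)}m_um_v\Big).
\]
In the variables $x_v=m_v/K$, a single block has free energy $f(x)\approx \frac{\delta}{2}x^2-\frac{x^4}{12}$. This is maximized at $|x|\approx\sqrt{3\delta}$, with a gain of order $K\delta^2\approx K^{2\varepsilon'}\gg 1$ over $x=0$. Each block is therefore (weakly) magnetized, with sign $s_v\in\{\pm1\}$. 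The coupling contributes $-\gamma K\, 3\delta\sum_{uv}s_us_v$. Hence, up to fluctuation factors, $Z_J\approx \sum_{s}\exp\big(\Theta(K\delta^2)(\#\text{cut edges of }s)\big)$, and $\log Z_J/(K\delta^2)$ estimates MAX-CUT$(H)$ to within a constant factor. This yields hardness within a factor $\exp(\Theta(K\delta^2\cdot n))=\exp(N^{c})$.

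The main obstacle is controlling fluctuations. At this scaling the window in which $x_v$ fluctuates around $\pm\sqrt{3\delta}$ has width $(K\delta)^{-1/2}$, which is comparable to a nonnegligible fraction of $\sqrt\delta$ unless $K\delta^2$ is large. Moreover, the per-block errors in the coupling term, $\gamma K\,|x_u-x_u^*|\,|x_v|$, are not individually $o(K\delta^2)$ with the needed uniformity. A per-gadget union-bound argument, as in the noncritical reductions, would therefore force $\delta\gg K^{-1/2}\cdot\mathrm{polylog}$ per block and lose in the exponent. Instead I would bound the total error globally. The Gaussian-type integral over all $(x_v)$ around a sign pattern $s$ has an effective quadratic form $K(\,2\delta I+\gamma A_H\,)$-ish, which is positive definite for $\gamma<\delta/\Delta$. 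Its log-determinant contributes only $O(n\log K)$ to $\log Z$, and its mean shift contributes $\sum_{uv}$ terms of total size $O(n\gamma^2/\delta)\cdot K\cdot\delta=O(nK\delta^2\cdot\gamma/\delta)$. This total is a small constant fraction of the signal once $\gamma/\delta$ is a small constant. I would make this rigorous via a Laplace-method upper and lower bound on the whole $n$-dimensional sum. The upper bound uses the concavity of $\sum_v f(x_v)-\gamma\sum x_ux_v$ restricted to each orthant. The lower bound restricts to a neighborhood of the optimizer. Comparing both with $\exp(3\gamma\delta K\cdot\mathrm{cut}(s))$ summed over $s$ then gives the constant-factor MAX-CUT gap, provided $K\delta^2\gg \log K$. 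That condition holds with $\delta=K^{-1/2+\varepsilon'}$.
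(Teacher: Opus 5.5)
Your construction and scaling match the paper's: a complete graph inside each cloud, complete bipartite antiferromagnetic couplings between adjacent clouds, $K=n^C$ and $\delta\approx K^{-1/2+\varepsilon'}$. Your block-magnetization formula for $Z_J$ and the mean-field picture are correct, and the lower bound is harmless, since a single term at $x_v=\pm\sqrt{3\delta}$ suffices, as in \eqref{COMAX}.

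The gap is the upper bound for graphs whose cuts are all small, which is where the real difficulty lies. You base it on concavity of $\sum_v f(x_v)-\gamma\sum_{uv}x_ux_v$ on each orthant, and that is false. The exact block free energy $f(x)=\frac{\beta}{2}x^2+H(\frac{1+x}{2})$ has $f''(x)=\beta-\frac{1}{1-x^2}>0$ for $x^2<1-1/\beta\approx\delta$; on each half-line, $0$ is a local minimum of $f$. So near the corner of every orthant the Hessian has positive diagonal entries and cannot be negative semidefinite. In addition, the coupling Hessian $-\gamma A_H$ is indefinite whatever the orthant.

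Your Gaussian computation (log-determinant, mean shift) is therefore only a local expansion at the wells $x_v=s_v\sqrt{3\delta}$. It does not show that the maximum over an orthant is attained near them. Nor does it exclude configurations where some blocks shrink toward $0$ or overshoot $\sqrt{3\delta}$ in response to their neighbours' fields, and a global estimate of that kind is exactly what the small-cut case needs. Separately, your mean-shift bookkeeping is off by a factor $\gamma/\delta$: the shift is $O(nK\gamma^2)$, a $\gamma/\delta$ fraction of the cut signal $\Theta(nK\gamma\delta)$, which is what your conclusion requires.

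The paper does not use Laplace asymptotics at all. It bounds $Z_J$ by $(t+1)^n$ times its largest term, which costs only $n\log(t+1)$, so no determinant is needed. It then controls the largest term globally:
\begin{itemize}
\item Lemma~\ref{lem:maxb} gives $\sum_vQ(b_v)\le nQ(\hat b)$.
\item Lemma~\ref{lo} applies the stationarity condition at the coordinate of largest modulus, together with monotonicity of $y\mapsto\frac1y\ln\frac{1+y}{1-y}$, to get $\max_v|b_v|\le\hat u$.
\end{itemize}
The paper takes $\gamma/\delta\to0$ polynomially precisely so that $\hat u/\hat b=1+t^{\delta'-\delta}\to1$. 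With your $\gamma=c\delta$, such an a priori bound loses a factor $\approx 1+3c$ on the cut edges. On its own it also says nothing about uncut edges, which enter with a minus sign and need a lower bound on $|x_v|$.

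A clean way to close your version is a global quadratic bound. Since $\frac{\operatorname{artanh}s}{s}=\sum_{k\ge0}\frac{s^{2k}}{2k+1}$, integrating $f'$ gives $f(x^*)-f(x)\ge\frac{1}{12}(x^2-x^{*2})^2\ge\frac{x^{*2}}{12}(|x|-x^*)^2$ on all of $[-1,1]$. Set $\Delta_v=|x_v|-x^*$, write $|x_u||x_v|=x^{*2}+x^*(\Delta_u+\Delta_v)+\Delta_u\Delta_v$, and use $|\Delta_u\Delta_v|\le\frac12(\Delta_u^2+\Delta_v^2)$. Then the exponent on the orthant of $s$ is at most $nKf(x^*)+\gamma Kx^{*2}(2\,\mathrm{cut}(s)-|E|)$ plus a sum of one-variable concave quadratics. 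Each of these contributes $O(\gamma^2K)$ once $\gamma$ is a small enough constant multiple of $\delta$. This bound holds on the whole orthant and treats cut and uncut edges uniformly. Combined with the $(K+1)^n$ count, it gives the required gap provided $c$ is small compared with $\tau-1$.
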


We remark that away from criticality it is sometimes possible to obtain hardness of approximation within a factor of $\exp(\Omega(N))$. This is \emph{not} the case for the critical case we are considering here: for any $\eta>0$ and $\|J\|_2 = 1 + \eta$, setting $\hat{J}=\tfrac{1}{1+\eta}J$ yields that $\|\hat{J}\|_2=1$ and $Z_J/Z_{\hat{J}}=\exp(\Theta(N\eta))$. One can then obtain in poly-time an approximation to $Z_{\hat{J}}$, and therefore $Z_J$, using the critical-case algorithm of \cite{CCYZ-critical-hard-core}. For $\eta=\tfrac{1}{N^{1/2-\varepsilon}}$, this yields in particular an $\exp(\Theta(N^{1/2+\varepsilon}))$-factor approximation to $Z_J$, even for matrices $J$ with $\|J\|_2 = 1 + \frac{1}{N^{1/2-\varepsilon}}$ that fall into the scope of Theorem~\ref{thm:Ising}.

For the sparse Ising and hard-core models, Chen, Chen, Yin and Zhang~\cite{CCYZ-critical-hard-core} (see also \cite{ChenJiang}) establish polynomial mixing of the Glauber dynamics at criticality for any graph of maximum degree $\Delta$.  Moreover, these works conjecture that the polynomial mixing extends to a window of width $N^{-1/2}$, but their current proof approaches do not extend beyond the critical point, and hence do not establish a scaling window.  Therefore, establishing a scaling window for algorithmic or hardness results in the sparse regime is an intriguing open problem.

\section{Proof Overview}
For the purposes of this section, we will assume that $\norm{J}_2=1+\eta$ for some small $\eta>0$. Before giving the proof of our inapproximability result, it is useful to explain the reduction scheme from \cite{Inapproxa} for the non-critical case where $\eta$ is a (small) constant and where it runs into problems in our case when we allow $\eta$ to depend on $N$. We remark that the idea for the reduction scheme goes back to the works of \cite{Dyer1,Sly2010, SlySun} for counting independent sets, and has since been the main template in follow-up works for establishing hardness of approximation for partition functions.

Key to the reduction is to exploit the phase transition on the complete graph that occurs for $\beta>\beta_c=1$ (which corresponds to $\norm{J}_2>1$) in order to encode \textsc{Max-Cut}. To align with the notation in the next section, let $t$ be the number of vertices in the graph. Recall, in this regime, a typical configuration $\sigma\in \{\mi 1,\pl 1\}^t$ is either biased towards $\pl 1$ or $\mi 1$, in the sense that the magnetization $m_\sigma$ satisfies $m_\sigma=\Omega(t)$ with high probability over~$\sigma$. More precisely, for every constant $\beta>1$ there is  $b=b(\beta)>0$ so that
\[\Pr\big[m_\sigma=bt +o(t)\big]=\tfrac{1}{2}+o(1)\mbox{ and }\Pr\big[m_\sigma=-bt +o(t)\big]=\tfrac{1}{2}+o(1).\]  
The reduction uses this bimodality in order to encode \textsc{Max-Cut} on 3-regular graphs $G$. So, replace every vertex of $G$ with a copy of the $t$-vertex complete graph; for each edge of~$G$ place a disjoint matching of  size $r$ between the corresponding copies of the gadget and place negative weights $-\gamma$ on the edges for some small $\gamma>0$ so that the spectral norm of the final graph is dominated by that of the gadget.\footnote{\label{fn:bc}
Setting $\beta=1+\Theta(\eta)$ and $\gamma=\Theta(\eta)$ for example gives that  $\norm{J}_2\leq \beta+6\gamma=1+O(\eta)$.} The idea is then that each gadget can  be either in the $\pl$ or $\mi$ phase (thus encoding a cut partition), whereas the negative weights on the $r$ edges between neighboring gadgets favour the phases to be different. More precisely,  for every phase assignment $Y=\{\mi,\pl\}^V$, the total contribution to the partition function of configurations that are consistent with the phase assignment is proportional to $c^{r|\mbox{\footnotesize\textsc{Cut}}(Y)|}$ where  for small $\gamma$ one has  $c=1+2(2b-1)^2\gamma +o(\gamma^2)$ (this can be derived, for example, from the argument in \cite[Appendix A.1]{Inapproxa}). Crucially, for fixed $\beta>1$, $c$ is a constant bounded away from 1 allowing to make the contribution of the max-cut dominant by taking the number of connections $r$ sufficiently large.

There are two bottlenecks in this argument that work against each other: \begin{enumerate}
    \item The connections between the gadgets need to amplify the phase disagreement so that the contribution of the max-cut prevails over  suboptimal cuts; in practice, this requires that the number $r$ of connections between gadgets is big enough so that $c^r=1+\Omega(1)$.
    \item On the other hand, the number of connections between gadgets needs to be small enough so that the fluctuations from the connections does not interfere with the phases of the gadget. In fact, to pull through the calculation with the weight of phase assignment, the endpoints of the connections within every gadget were required to behave  independently (conditioned on the phase). This restricted the number of ports to be a small polynomial relative to the total size of the gadget. For non-critical cases for example $r=o(\sqrt{t})$ was absolutely necessary in order to guarantee this conditional independence.
\end{enumerate} For non-critical cases, the bias $b$ towards the phase is bounded away from 0, leading to $c$ being a constant bigger than one, allowing therefore to take $r$ as small as a constant. For the critical case however, these bottlenecks become far trickier to balance: when $\beta= 1+\eta$ for some small $\eta$, $b$ is roughly $\tfrac{1}{2}+\Theta(\sqrt{\eta})$, and with the choice $\gamma=\Theta(\eta)$ (cf. Footnote~\ref{fn:bc}), this leads to $c$ being $1+\Theta(\eta^2)$. In order to ensure that $c^r=\Omega(1)$  it is therefore required that $r>1/\eta^2$. For $\eta=t^{-1/2+\varepsilon}$ this already conflicts the requirement 2) quite strongly. The best bound that one can obtain along these lines is inapproximability for $\norm{J}_2\leq 1+N^{-1/4+\varepsilon}$, and even that requires surprisingly careful arguments in part 2) due to the weaker concentration because of the criticality (the decay is quartic rather than quadratic around the mode). So, to obtain a tight window, we need a fundamentally different approach.

 Our approach to obtaining the optimal window is surprisingly direct. We give up completely on maintaining/proving the strong form of conditional independence that was used in previous reductions (mentioned in Item 2 above). Freed from this restriction, we can now use the overall bias of the gadget to control its interaction with the other gadgets and hence get the amplification of the phase disagreement by placing complete bipartite graphs between the gadgets. The price we have to pay is that the fluctuations coming from the gadget can potentially be enormous (of $O(t)$) and might shift the contribution of a cut unexpectedly. 

Fortunately, we can capture what is happening using a more global perspective. Namely, for each gadget we keep track of the bias $b_v$ of the gadget; for each vector $\{b_v\}_{v\in V}$, we can write an explicit analytic expression $\Phi$ that captures the contribution to the partition function. The goal then is, roughly, to show that the signs of $b_v$ in the dominant vector yield a large cut with large value. One of the key steps (see Lemma~\ref{lo}) is to show that at the optimum, each $b_v$ is actually bounded by $t^{3/4+O(\varepsilon)}$, rather than the worst-case $O(t)$ mentioned above; note that this matches the behaviour of the gadget that one expects at critical regimes. Having this in place, the rest of the proof is somewhat more streamlined, requiring only some careful estimates to recover the right asymptotics and control the gap between near-optimal and suboptimal cuts.

\section{Hardness Proof}

As outlined in the previous section, the starting point of the reduction is the result that it is APX-hard to approximate within a constant factor the size of the MAX-CUT for $3$-regular graphs~\cite{alimonti1997hardness}. 
Let $G=(V,E)$ be a $3$-regular graph. Let $n=|V|$ and $t$ a positive even integer to be fixed later (we will take $t=n^C$ for a sufficiently large constant $C$). 

\begin{definition}\label{def1}
Let $\beta, \gamma$ be positive reals. For a graph $G=(V,E)$ and even integer $t$, let $G_{t}(\beta,\gamma)=(V\times [t],E')$ where in $E'$ we have
\begin{itemize}\itemsep0em 
\item an edge of weight $\beta$ between $(v,i)$ and $(v,j)$ for every $v\in V$ and every $i\neq j\in [t]$,
\item an edge of weight $-\gamma$ between $(u,i)$ and $(v,j)$ for every $\{u,v\}\in E$ and every $i,j\in [t]$.
\end{itemize}
We define the ``cloud'' of $v$ to consist of vertices $(v,1),\dots,(v,t)$.
\end{definition}

Consider the Ising model on $G_t=G_t(\beta,\gamma)$ with the above weights. Observe that the interactions within each cloud are positive and the interactions across clouds are negative.  Moreover, there is a natural correspondence between Ising configurations on $G_t$ and cuts on $G$: we let the corresponding cut contain the vertices $v$ 
where the sum of the spins in the cloud of $v$ is $\leq 0$ (the vertices on the other side of the cut have sum of spins $>0$; the minor asymmetry in the correspondence is not important since $t$ will be large). 

For any $\eps>0$ and $\tau>1$, we will show that there exists a choice of $\beta, \gamma$ and $t$ such that 
1) an efficient algorithm to approximate the partition function of the Ising model on $G_t$  yields an efficient algorithm to obtain a $\tau$-approximation of the max-cut of $G$, and 
2) the $N\times N$ (where $N=nt$) interaction matrix $J$ of the Ising model on $G_t$ has spectral diameter $\leq  1 + N^{-1/2+\eps}$ (recall that the spectral diameter is the max minus the min eigenvalue of $J$, cf. Footnote~\ref{fn:psd}). More precisely, we will establish the following.

\begin{theorem}\label{TMAIN}
For any $\eps>0$ and $\tau>1$, there are $c,C>0$ so that for all  sufficiently large~$n$, there exist $\beta,\gamma>0$ such that the following holds. 

For any 3-regular graph  $G=(V,E)$ with $n$ vertices, let $G_t=G_t(\beta,\gamma)$ be the Ising instance from Definition~\ref{def1} with $t=n^C$, and $J$ be its $N\times N$ interaction matrix, where $N=n t = n^{C+1}$ is the number of vertices in $G_t$. Then, the spectral diameter of  $J$ is $\leq 1 + N^{-1/2+\eps}$. Moreover, for any $A$ with  $A\geq \tau |E|/2$,   there are $T_1,T_2>0$ with $T_1/T_2\geq \exp(2N^c)$ such that:\vspace{-0.15cm}
\begin{enumerate}
\itemsep0em 
\item for any $G$ with max-cut of size $\geq A$, the partition function $Z_J$ is $\geq T_1$, 
\item for any $G$ with max-cut of size $\leq \tfrac{A}{\tau}$, the partition function $Z_J$  is $\leq T_2$.
\end{enumerate}\vspace{-0.1cm}
The numbers $\beta,\gamma,T_1,T_2$ can be computed in $poly(n)$ time.
\end{theorem}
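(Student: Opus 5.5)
Write the partition function in terms of cloud magnetizations. For $\sigma$ on $G_t$, let $m_v=\sum_i \sigma_{(v,i)}$. Then
\[
\tfrac12\sigma^\top J\sigma=\sum_v \tfrac{\beta}{2}(m_v^2-t)-\gamma\sum_{\{u,v\}\in E} m_u m_v ,
\]
so that
\[
Z_J=\sum_{(m_v)}\prod_v \binom{t}{(t+m_v)/2}\exp\Big(\sum_v\tfrac{\beta}{2}(m_v^2-t)-\gamma\sum_{uv\in E}m_um_v\Big).
\]
This is a sum over only $(t+1)^n=\exp(O(n\log t))$ vectors. It therefore suffices to estimate the maximum of the exponent $\Phi(m)$, up to additive $O(n\log t)$, which is negligible against $N^c$. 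Using Stirling, $\log\binom{t}{(t+m)/2}=t\log 2-t\,I(m/t)+O(\log t)$ with $I(x)=\frac{x^2}{2}+\frac{x^4}{12}+O(x^6)$.

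\textbf{Parameters.} Use the scaling $\beta$ per entry, so the cloud block has top eigenvalue $\beta t$ (up to the diagonal shift). Set $\beta=(1+\eta)/t$ and $\gamma=\delta\eta/t$, with $\eta=t^{-1/2+2\eps}$ and a small constant $\delta$. The spectral diameter of $J$ is then at most $\beta t+3\gamma t+O(\beta)\le 1+O(\eta)$. With $N=n^{C+1}$ and $C$ large, this is $\le 1+N^{-1/2+\eps}$.

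\textbf{Reduced exponent.} Writing $x_v=m_v/t$, the exponent becomes
\[
t\Big[\sum_v\big(\tfrac{\eta}{2}x_v^2-\tfrac{1}{12}x_v^4\big)-\delta\eta\sum_{uv}x_ux_v\Big]+O(t\textstyle\sum_v x_v^6)+\text{const}.
\]

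\textbf{Step 1 (localization; the analogue of Lemma~\ref{lo}).} At any near-maximizer, every $|x_v|\le O(\sqrt\eta)$, i.e. $|m_v|\le t^{3/4+O(\eps)}$. The argument: since $\sum_{uv}|x_ux_v|\le\frac32\sum x_v^2$, the quadratic part is at most $\sum_v(\tfrac{\eta}{2}+\tfrac32\delta\eta)x_v^2$. Against the quartic term $-x_v^4/12$ (the exact $I$ is convex and superquadratic for large $|x|$), this forces $x_v^2=O(\eta)$; otherwise the exponent is lower than at $x=0$. This error control is where care is needed. In particular, one must check that the $x^6$ and Stirling remainders are $o(t\eta^2)$ on the relevant region. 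Indeed $t\eta^3=o(t\eta^2)$ is fine, and so is $O(n\log t)$, since $t\eta^2=t^{4\eps}$ is a power of $n$.

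\textbf{Step 2 (rescaling).} Substitute $x_v=\sqrt{\eta}\,y_v$. The exponent becomes $t\eta^2\,F(y)+o(t\eta^2)$, where
\[
F(y)=\sum_v\Big(\tfrac{y_v^2}{2}-\tfrac{y_v^4}{12}\Big)-\delta\sum_{uv\in E}y_uy_v .
\]
So $\log Z_J=\text{const}+t\eta^2\max_y F(y)+o(t\eta^2)$, with the constant independent of $G$.

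\textbf{Step 3 (relating $\max F$ to max-cut).} Expand to first order in $\delta$. Restricting to $y_v=\pm\sqrt3$ with signs given by a cut $S$, and perturbing, gives
\[
\max F\ge n\cdot\tfrac34+3\delta\big(2|\mathrm{Cut}(S)|-|E|\big)-O(\delta^2 n).
\]
Conversely, for any $y$, set $S$ to be the sign pattern of $y$. Since each coordinate's single-site maximum is $3/4$, attained only at $\pm\sqrt3$ with quadratic decay, and the coupling is $O(\delta)$, optimal $y$ satisfy $|y_v^2-3|=O(\delta)$. This yields
\[
\max F\le \tfrac34 n+3\delta\big(2\,\mathrm{maxcut}-|E|\big)+O(\delta^2 n).
\]

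\textbf{Choosing $T_1,T_2$.} With $|E|=3n/2$ and $A\ge\tau|E|/2$, the gap between the cases max-cut $\ge A$ and max-cut $\le A/\tau$ is $6\delta(A-A/\tau)\ge 6\delta(1-1/\tau)\tau|E|/2=\Omega(\delta n)$. Choosing $\delta$ small in terms of $\tau$ makes this gap dominate the $O(\delta^2 n)$ error. So $T_1/T_2\ge\exp(\Omega(\delta n\, t\eta^2))=\exp(\Omega(n t^{4\eps}))\ge\exp(2N^c)$ for suitable $c$. $T_1$ and $T_2$ are explicit expressions in $n,t,\eta,\delta,A$ and are computable in polynomial time.

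The main obstacle is Step 1 together with the uniform error bounds. The quartic confinement is weak, so the pieces must be balanced exactly: the Stirling and higher-order remainders must stay below $t\eta^2$, and the $\delta$-perturbation in Step 3 must be controlled uniformly over all cuts.
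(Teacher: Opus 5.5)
Your outline is sound but takes a different route from the paper, and two points need repair.

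\textbf{Comparison of routes.} Both arguments share the same skeleton: write $Z_J$ as a sum over cloud magnetizations, lose only a factor $(t+1)^n$, and compare maxima of the exponent. They differ in how $\beta,\gamma$ are tuned and how the cut term is extracted.

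You fix $\beta t=1+\eta$ and $\gamma t=\delta\eta$ with a constant ratio $\delta=\delta(\tau)$, and pass to the quartic functional $F$. You then read off the max-cut dependence by first-order perturbation in $\delta$, absorbing the $O(\delta^2 n)$ error by taking $\delta$ small against $\tau-1$.

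The paper never Taylor-expands. It defines $\beta,\gamma$ exactly through $\hat b=t^{3/4+\delta}$ and $\hat u=\hat b+t^{3/4+\delta'}$, using the monotonicity of $y\mapsto\frac1y\ln\frac{1+y}{1-y}$. So $\hat b$ is the exact one-cloud maximizer, and Lemma~\ref{lo} gives $|b_v|\le\hat u$ exactly. In your language its coupling ratio is $\gamma t/\eta\asymp t^{\delta'-\delta}\to0$, so $\hat u/\hat b\to1$ and the factor $1-1/\tau$ survives with no constant to tune.

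The paper's route buys exact elementary inequalities and closed-form $T_1,T_2$. Yours buys a two-sided localization $|y_v^2-3|=O(\delta)$ with an explicit error, and that is what actually controls the non-cut edges. The paper's $T_2$ step treats $-4\gamma b_ub_v$ on non-cut edges as if $|b_ub_v|=\hat u^2$. That needs a lower bound on $|b_v|$, which Lemma~\ref{lo} does not provide.

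\textbf{Repairs.} First, $\eta=t^{-1/2+2\varepsilon}$ cannot meet the spectral bound for any $C$. With $t=N^{C/(C+1)}$ one gets $t^{-1/2+2\varepsilon}=N^{-1/2+\varepsilon+(1/2+\varepsilon(C-1))/(C+1)}>N^{-1/2+\varepsilon}$. Take instead $\eta=t^{-1/2+\varepsilon'}$ with $\varepsilon'<\varepsilon$, e.g.\ $\varepsilon'=\varepsilon/2$ and $C+1>1/\varepsilon$. (The paper effectively has $t\eta^2\asymp t^{2\varepsilon/3}$ with $C=\lceil 3/\varepsilon\rceil$.) Nothing else changes, since you only use that $t\eta^2$ is a positive power of $n$.

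Second, Step~1 as written does not localize individual coordinates. Bounding the whole exponent by $\sum_v[(\tfrac{\eta}{2}+\tfrac32\delta\eta)x_v^2-x_v^4/12]$ lets one large $x_v$ be offset by the other $n-1$ positive terms, so ``lower than at $x=0$'' does not follow. Per-coordinate control requires the stationarity argument at the coordinate of largest modulus, i.e.\ the paper's Lemma~\ref{lo}.

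Simpler still, drop Step~1. Every Taylor coefficient of $I$ is positive, so $I(x)\ge x^2/2+x^4/12$ on $[-1,1]$. Hence $t\eta^2F(x/\sqrt{\eta})$, plus a $G$-independent constant, is a global upper bound on the exponent. Remainders are then needed only at the explicit point $x_v=\pm\sqrt{3\eta}$, where they total $O(nt\eta^3+n\log t)$. Compare this with the $\Omega(\delta n t\eta^2)$ gap, not with $t\eta^2$.
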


Theorem~\ref{TMAIN} immediately yields Theorem~\ref{thm:Ising}.

\begin{proof}[Proof of Theorem~\ref{thm:Ising}.]
Fix arbitrary $\varepsilon>0$.  From the APX-hardness result of \cite{alimonti1997hardness}, there exists an explicit constant $\tau>1$ so that the following  gap-version of \textsc{Max-Cut} is \mbox{NP-hard}. The input is  a $3$-regular graph $G=(V,E)$ with $n$ vertices and a number $A$ with  $A\geq \tau |E|/2=3\tau n/4$. We are promised that 
$G$ either: (a) has a max-cut of size $A$, or (b) has all cuts of size $\leq A/\tau$; the goal is to decide which of the two cases holds. Let $C,c>0$ be the constants from Theorem~\ref{TMAIN}. We will show how to decide the gap version of \textsc{Max-Cut} using a poly-time algorithm that, on input an $N\times N$ psd matrix $J$ with $\|J\|_2 \leq  1 + \frac{1}{N^{1/2-\varepsilon}}$, returns an approximation of the partition function $Z_J$  within a factor $\exp(N^c)$.

In particular, on input $G$ and $A$, we let $t=n^C$ and $N=nt=n^{C+1}$ and compute the numbers $\beta,\gamma,T_1,T_2>0$ as in Theorem~\ref{TMAIN}. Then,  construct the instance $G_t=G_t(\beta,\gamma)$ of the Ising problem given by Definition~\ref{def1}. By Theorem~\ref{TMAIN}, the interaction matrix $J$ of $G_t$ has dimension $N\times N$ and spectral diameter $\leq 1 + N^{-1/2+\eps}$. Setting $\hat{J}=J-\lambda I$ where $\lambda$ 
is the smallest eigenvalue of $J$ yields that $\hat{J}$ is psd and has $2$-norm equal to the spectral diameter of $J$; we also have that $Z_{\hat{J}}=K Z_J$ where $K:=\exp(-\tfrac{1}{2}\lambda N)$. Note that $\lambda$ (and hence $K$) can be computed to sufficient precision in $poly(n)$ time.

Suppose we can approximate the partition function $Z_{\hat{J}}$ of the Ising model on $G_t$ with factor better than $R=\exp(N^c)$, i.e., obtain $\hat{Z}$ so that   $Z_{\hat{J}}/\hat{Z}\in[1/R,R]$. Then, $K \hat{Z}$ is an estimate of the partition function $Z_J$ with factor better than $R$. Moreover, using that $T_1/T_2\geq \exp(2N^c)$, we have that exactly one of $K \hat{Z}/R\geq T_1$ and $T_2\geq K \hat{Z} R$ holds.  In the former case, we can conclude that $Z_J\geq T_1$, while in the latter we can conclude that $T_2\geq Z_J$. That is, we can distinguish between the cases (a) and (b) 
for $G$ (since we can distinguish between the cases 1) and 2) for $G_t$ in the statement of Theorem~\ref{TMAIN}).
\end{proof}

We start by introducing notation needed for the proof of Theorem~\ref{TMAIN}. Let $G=(V,E)$ be a 3-regular graph with $n$ vertices. Let also $t$ be an even integer and $\beta,\gamma>0$ be real numbers, to be specified later. Consider finally the graph $G_t=G_t(\beta,\gamma)$ from Definition~\ref{def1} and let $J$ be its interaction matrix.

Given a vector $\mathbf{b}=\{b_v\}_{v\in V}$ (where $|b_v|\leq t/2$ for $v\in V$) consider a configuration $\sigma$ that has $t/2 + b_v$
vertices with spin $\pl1$ on the cloud of $v$, so that the remaining $t/2 - b_v$ vertices have spin $\mi1$. The number of
such configurations  is
\begin{equation}\label{ee1}
\prod_{v\in V} {t\choose t/2 + b_v},
\end{equation}
since we can choose the vertices with spin $\pl 1$ in each cloud independently of the others.

In the cloud for $v$ the number of monochromatic edges in $\sigma$ is
$$
{t/2 + b_v\choose 2} + {t/2 - b_v\choose 2} = 2{t/2\choose 2} + b_v^2,
$$
whereas the number of bichromatic is $(t/2+b_v)(t/2-b_v)$; their difference is  $2b_v^2-t/2$. Similarly, the number of monochromatic edges between clouds for $u$ and $v$ in $\sigma$ is
\[
(t/2+b_v)(t/2+b_u) + (t/2-b_v)(t/2-b_u) = t^2/2 + 2 b_u b_v,\]
whereas the number of bichromatic is $t^2/2 - 2 b_u b_v$; their difference is therefore $4b_u b_v$. It follows  that $\exp(\tfrac{1}{2}\sigma^T J\sigma)$ is equal to 
\begin{equation}\label{ee2}
K\exp\Big( 2\beta \sum_{v\in V} b_v^2 - 4 \gamma \sum_{\{u,v\}\in E} b_u b_v \Big),
\end{equation}
where $K:=\exp(-\tfrac{1}{2}\beta n t)$ is independent of $\sigma$. The partition function $Z_J$ is then equal to
\begin{equation}\label{eq:partitionfun}
Z_J = K\sum_{\mathbf{b}} \bigg( \prod_{v\in V} {t\choose t/2 + b_v}\bigg) \exp\Big( 2\beta \sum_{v\in V} b_v^2 - 4 \gamma \sum_{\{u,v\}\in E} b_u b_v \Big),
\end{equation}
where the sum ranges over all vectors $\mathbf{b}=\{b_v\}_{v\in V}$ with integer entries in $[-t/2,t/2]$. 

For $x\in [0,1]$, let $H(x)=- (x\ln x + (1-x)\ln (1-x))$ be the entropy function, with the convention that $0\ln 0\equiv 0$. For a real number $b\in [-t/2,t/2]$, let further
\begin{equation}\label{ZOB}
Q(b):= 2\beta b^2 + t\, H\big(\tfrac{1}{2} + \tfrac{b}{t}\big) \geq  2\beta b^2 + \log {t\choose t/2 + b},
\end{equation}
and for $\mathbf{b}=\{b_v\}_{v\in V}$ with entries in $[-t/2,t/2]$ let
$$
\Phi(\mathbf{b}) = \sum_{v\in V} Q(b_v) - 4\gamma \sum_{\{u,v\}\in E } b_u b_v.
$$
The function $\Phi(\mathbf{b})$ roughly captures the logarithm of the contribution of configurations
with specification $\{b_v\}_{v\in V}$ to the partition function, combining~\eqref{ee1} and~\eqref{ee2} 
(note that in light of the last inequality in~\eqref{ZOB} we have that $\Phi(\mathbf{b})$ is 
actually an upper bound on the logarithm of the contribution).

Let $\hat{b}\in (0,t)$. We are going to take
\begin{equation}\label{e1}
\beta = \frac{1}{4\hat{b}}\ln \frac{t + 2\hat{b}}{t - 2\hat{b}}.
\end{equation}
so that $\hat{b}$ is the maximizer of $Q(b)$ restricted to $b\geq 0$. Note that balanced configurations (those having equal number of plus and minus spins) have
\begin{equation}\label{Q0}
Q(0) = t\ln 2,
\end{equation}
and the configurations that (roughly) have maximum likelihood on a cloud have
\begin{equation}
\begin{split}
Q(\hat{b}) & = t\ln 2 - \frac{t}{2}\Big( (1+\hat{b}/t)\ln(1+2\hat{b}/t) + (1-\hat{b}/t)\ln(1-2\hat{b}/t) \Big) \\ \label{QB}
& = t\ln 2 + \frac{4}{3} \frac{\hat{b}^4}{t^3} + O\left(\frac{\hat{b}^6}{t^5}\right).
\end{split}
\end{equation}
In the reduction we will need to have a large (growing as a function of $t$) difference between the values of $Q(0)$ and $Q(\hat{b})$---this will
force us to take $\hat{b}=\omega(t^{3/4})$. (The intuition is that the function $Q$ describes the behavior of the Ising model on one cloud, the two maxima 
$Q(\hat{b})$ and $Q(-\hat{b})$ describe two phases, and the dip at $Q(0)$ separates the phases.)

\begin{lemma}\label{lem:maxb}
For any $\hat{b}\in (0,t)$ and $\beta=\frac{1}{4\hat{b}}\ln \frac{t + 2\hat{b}}{t - 2\hat{b}}$ as in~\eqref{e1}, the maximum of $Q(b)$ in \eqref{ZOB} over $b\in [0,t/2]$  is achieved uniquely at $b=\hat{b}$.
\end{lemma}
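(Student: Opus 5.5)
The plan is a one-variable calculus argument: compute $Q'$, show it has exactly one zero on $(0,t/2)$, namely at $b=\hat b$, and read off its sign on either side. I take $\hat b\in(0,t/2)$ throughout, since otherwise the logarithm in~\eqref{e1} is undefined.

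First I will differentiate $Q$ on $(0,t/2)$. Since $H'(x)=\ln\frac{1-x}{x}$, we get
\[
Q'(b)=4\beta b-\ln\frac{t+2b}{t-2b}.
\]
Next I substitute $x=2b/t\in(0,1)$, so that $\ln\frac{t+2b}{t-2b}=2\,\mathrm{artanh}(x)$ and $4\beta b=2\beta t x$. This gives
\[
Q'(b)=2x\Big(\beta t-\frac{\mathrm{artanh}(x)}{x}\Big).
\]
The key observation is that $g(x):=\mathrm{artanh}(x)/x=\sum_{k\ge0}\frac{x^{2k}}{2k+1}$ is strictly increasing on $(0,1)$. It tends to $1$ as $x\to0^+$ and to $\infty$ as $x\to1^-$.

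The choice~\eqref{e1} is precisely the statement $\beta t=g(\hat x)$ with $\hat x=2\hat b/t$. In particular $\beta t>1$, so $\beta$ is well defined and positive. Strict monotonicity of $g$ then gives the sign pattern of $Q'$: it is $>0$ for $0<b<\hat b$, vanishes at $b=\hat b$, and is $<0$ for $\hat b<b<t/2$.

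Finally, $Q$ is continuous on the closed interval $[0,t/2]$, with $H$ continuous at the endpoint under the convention $0\ln0=0$. So $Q$ is strictly increasing on $[0,\hat b]$ and strictly decreasing on $[\hat b,t/2]$. Hence $\hat b$ is the unique maximizer of $Q$ on $[0,t/2]$.

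There is no real obstacle here. The only step needing care is the strict monotonicity of $g$, which follows immediately from the power series with positive coefficients, together with the behaviour at the endpoint $b=t/2$, which is handled by continuity.
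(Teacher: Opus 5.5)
Your proof is correct and follows the same route as the paper. Both compute $Q'(b)=4\beta b-\ln\frac{t+2b}{t-2b}$, rewrite the critical-point condition as $\beta t=g(2b/t)$, and use strict monotonicity of $y\mapsto\frac{1}{y}\ln\frac{1+y}{1-y}$; the differences are minor: you get monotonicity from the positive-coefficient series $\mathrm{artanh}(x)/x=\sum_{k\ge0}x^{2k}/(2k+1)$ rather than the paper's derivative computation, and you spell out the sign pattern of $Q'$, the endpoint $b=t/2$, and the needed restriction $\hat b<t/2$, all of which the paper leaves implicit.
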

\begin{proof}
We have
\begin{equation}\label{epart}
\frac{\partial}{\partial b} Q(b) = 4\beta b - \ln\frac{t+2b}{t-2b}.
\end{equation}
For~\eqref{epart} to be zero we need to have
$$\beta t = \frac{1}{2} \frac{t}{2b} \ln\frac{1+2b/t}{1-2b/t}.$$
We will show that the function
\begin{equation}\label{eeo}
y\mapsto \frac{1}{y}\ln\frac{1+y}{1-y}
\end{equation}
is increasing on $[0,1]$ and hence there is unique critical $b$ (and we chose $\beta$ so that $\hat{b}$ is critical).

The derivative of~\eqref{eeo} is
$$
\frac{2}{y(1-y^2)} - \frac{1}{y^2}\ln\frac{1+y}{1-y},
$$
which we will show is positive on $(0,1)$. It is sufficient to show that the following is increasing
\begin{equation}\label{eeozz}
\frac{2y}{(1-y^2)} - \ln\frac{1+y}{1-y},
\end{equation}
since the value at $y=0$ is $0$. The derivative of~\eqref{eeozz} 
$$
\frac{4y^2}{(1-y^2)^2}>0,
$$
true for $y\in (0,1)$.
\end{proof}

Let $\hat{u}\in (\hat{b},t)$. We are going to take $\gamma$ such that
\begin{equation}\label{e2}
\beta + 3\gamma = \frac{1}{4\hat{u}}\ln \frac{t + 2\hat{u}}{t - 2\hat{u}}.
\end{equation}
Note that since~\eqref{eeo} is an increasing function on $(0,1)$ and we chose $\hat{u}>\hat{b}$ we have $\gamma>0$.

The parametrization of $\gamma$ is chosen in such a way that $\hat{u}$ will bound the infinity norm of the
(local) maximizers of $\Phi(\mathbf{b})$; the following lemma makes this formal.

\begin{lemma}\label{lo}
Let $\beta,\gamma$ be given by~\eqref{e1} and~\eqref{e2}. Let $\{a_v\}_{v\in V}$ be such that $a_v\in\{\pm 1\}$ for each $v\in V$. Suppose $\mathbf{b}=\{b_v\}_{v\in V}$ maximizes $\Phi(\mathbf{b})$ subject to $a_v b_v\geq 0$ for $v\in V$ (that is the signs of $b_v$'s are given by $a_v$'s).
Then
\begin{equation}\label{eupo}
\max_{v\in V} |b_v|\leq\hat{u}.
\end{equation}
\end{lemma}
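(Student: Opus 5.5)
Fix a sign vector $\{a_v\}$ and a maximizer $\mathbf{b}$ of $\Phi$ on the closed box $\{a_vb_v\ge 0,\ |b_v|\le t/2\}$; a maximizer exists because $\Phi$ is continuous on this compact set. The idea is to use the first-order condition at a coordinate where $|b_v|$ is largest. Pick $w$ with $|b_w|=M:=\max_v|b_v|$ and suppose, for contradiction, that $M>\hat u$. Since $M>\hat{u}>0$, we have $b_w\neq 0$. Moreover, $\partial Q/\partial b\to-\infty$ as $|b|\to t/2$, because the entropy term has infinite slope at the ends. So the maximizer cannot sit at $|b_w|=t/2$ (or, if we insist on allowing it, we simply perturb it inward). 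Hence $b_w$ is an interior point in its coordinate, and the partial derivative vanishes:
\[
0=\frac{\partial\Phi}{\partial b_w}=4\beta b_w-\ln\frac{t+2b_w}{t-2b_w}-4\gamma\sum_{u\sim w}b_u .
\]
By symmetry we may take $a_w=1$, so $b_w=M>0$.

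Next we bound the neighbor term using only $|b_u|\le M$ and the fact that $w$ has exactly $3$ neighbors: $-4\gamma\sum_{u\sim w}b_u\le 12\gamma M$. The sign pattern is irrelevant here; the worst case is all neighbors at $-M$. Substituting into the stationarity equation gives
\[
\ln\frac{t+2M}{t-2M}\le 4(\beta+3\gamma)M,\qquad\text{i.e.}\qquad \frac{1}{2M/t}\ln\frac{1+2M/t}{1-2M/t}\le 2t(\beta+3\gamma).
\]
By the choice~\eqref{e2}, the right-hand side equals $\frac{1}{2\hat u/t}\ln\frac{1+2\hat u/t}{1-2\hat u/t}$. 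The function~\eqref{eeo} was shown to be strictly increasing in the proof of Lemma~\ref{lem:maxb}, so this forces $2M/t\le 2\hat u/t$, that is $M\le\hat u$. This contradicts the assumption $M>\hat u$ and gives~\eqref{eupo}. Here $\hat u<t/2$ is implicitly needed for~\eqref{e2} to make sense.

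The only delicate point is justifying stationarity at $w$, namely ruling out boundary maximizers. The sign boundary $b_w=0$ is excluded by $M>0$. The boundary $|b_w|=t/2$ is excluded because $Q'(b)\to-\infty$ there while the coupling term stays bounded, so moving $b_w$ slightly inward strictly increases $\Phi$. If one prefers to avoid derivatives altogether, the same conclusion follows from a one-sided argument. Decreasing $b_w$ from $M$ must not increase $\Phi$, so the left derivative at $M$ is $\ge 0$, which is exactly the inequality used above.
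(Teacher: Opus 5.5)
Your proposal is correct and follows the same route as the paper. Both take the coordinate of largest absolute value, impose the first-order condition there, bound the neighbour sum by $3M$ using 3-regularity, and conclude via \eqref{e2} and the monotonicity of $y\mapsto\frac{1}{y}\ln\frac{1+y}{1-y}$. You are in fact slightly more careful than the paper on three points: you exclude boundary maximizers (using $Q'\to-\infty$ at $t/2$, or the one-sided derivative), you give the correct sign $-4\gamma S$ in the derivative where \eqref{PB1de} has a harmless typo $+4\gamma S$ (only $|S|\le 3b_1$ is used), and you note that $\hat{u}<t/2$ is needed.
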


\begin{proof}
Let $\mathbf{b}=\{b_v\}_{v\in V}$ be a maximizer.  W.l.o.g. $b_1$ has the largest absolute value and $b_1>0$. Let
\begin{equation}\label{ES}
S=\sum_{u; \{u,1\}\in E} b_u.
\end{equation}
Note that $S\in [-3b_1,3b_1]$ (since $G$ is a $3$-regular graph and $b_1$ bounds other $|b_u|$'s). Viewing $\Phi$ as a function of $b_1$ we have
\begin{equation}\label{PB1}
\Phi(b_1) = Q(b_1) - 4\gamma S b_1 + R,
\end{equation}
where $R$ and $S$ do not depend on $b_1$. Since $\mathbf{b}=\{b_v\}_{v\in V}$ is a maximizer and $b_1$ is not on the boundary
we must have
\begin{equation}\label{PB1de}
\frac{\partial}{\partial b_1} \Phi(b_1)  = 4\beta b_1 + 4\gamma S - \ln\frac{1+2b_1/t}{1-2b_1/t} = 0.
\end{equation}
We have
$$
\frac{1}{4 b_1} \ln\frac{1+2b_1/t}{1-2b_1/t} = \beta + \gamma S/b_1 \leq \beta + 3\gamma = \frac{1}{4\hat{u}}\ln \frac{1 + 2\hat{u}/t}{1 - 2\hat{u}/t},
$$
and since~\eqref{eeo} is increasing we conclude $b_1\leq\hat{u}$. 
\end{proof}

\begin{proof}[Proof of Theorem~\ref{TMAIN}.]
We will set
$$
t = n^C, \quad\quad \hat{b}=t^{3/4+\delta}\quad\mbox{and}\quad \hat{u} = \hat{b} + t^{3/4+\delta'} 
$$
where $C>0$ is large and $\delta>\delta'>0$ are small; we will in fact take $\delta=\eps/6$, $\delta'=\eps/12$, $C=\lceil 3/\eps\rceil$ and $c=1/(C+1)$, but this will only be relevant later. The value of $\beta$ is given by~\eqref{e1} and the value of $\gamma$ is given by \eqref{e2}.
For $x\leq 1/2$ we have
$$
(1/x) \ln \frac{1+x}{1-x}\leq 2(1 + x^2);
$$
this follows, by integration, from $2/(1-x^2)\leq 2(1+x)(1+3x)$ for $x\in [0,1/2]$. Hence, for sufficiently large $n$, we may apply this for $x=2\hat{b}/t\leq 1/2$ (by our choice of $t$ and $\hat{b}$), to obtain that
\begin{equation}\label{ALUPE}
\beta = \frac{1}{2t} (1/(2\hat{b}/t)) \ln \frac{1+2\hat{b}/t}{1-2\hat{b}/t}\leq \frac{1}{t} + 8 t^{-3/2+ 2\delta}.
\end{equation}
Plugging the definitions of $\hat{b}$ and $\hat{u}$ into~\eqref{e2} we obtain
\begin{equation}\label{GASY}
\gamma = \frac{8}{9} t^{-3/2 + \delta + \delta'} + o( t^{-3/2 + \delta + \delta'}).
\end{equation}

We next give bounds on the partition function $Z_J$ depending on the size of the cut of~$G$. Recall that 
\begin{equation*}\tag{\ref{eq:partitionfun}}
Z_J = K\sum_{\{b_v\}_{v\in V}} \bigg( \prod_{v\in V} {t\choose t/2 + b_v}\bigg) \exp\Big( 2\beta \sum_{v\in V} b_v^2 - 4 \gamma \sum_{\{u,v\}\in E} b_u b_v \Big),
\end{equation*}
where the sum ranges over all vectors $\mathbf{b}=\{b_v\}_{v\in V}$ with integer entries in $[-t/2,t/2]$.

Suppose first that $G$ has a max-cut of size $A$, and fix a particular max-cut of $G$. Recall that we assume $A\geq \tau|E|/2\geq (3/4) n$. The 
Ising configurations on $G_t$ that correspond to the max-cut contribute at least the following to the partition function 
\begin{equation}\label{COMAX}
T_1 = K{t\choose t/2+\hat{b}}^n \exp\Big( 2\beta n \hat{b}^2 + 4 \gamma  \hat{b}^2 (2 A - (3/2)n)\Big).
\end{equation}
Now suppose that all cuts of $G$ have size at most $ A/\tau$. Recall from \eqref{ZOB}  that $Q(b)\geq  2\beta b^2 + \log {t\choose t/2 + b}$. The partition function is at most
\begin{equation}\label{COTAU}
T_2 = K(t+1)^n \exp\Big(n Q(\hat{b})+ 4 \gamma  \hat{u}^2 (2 A/\tau - (3/2)n)\Big),
\end{equation}
since the number of choices for $\{b_v\}_{v\in V}$ in \eqref{eq:partitionfun} is bounded by $(t+1)^n$ (the sign of the $b_i$'s 
determines the corresponding cut),  and  a contribution of any choice of $\mathbf{b}=\{b_v\}_{v\in V}$ to the partition function is bounded by
\begin{equation}\label{ezz}
K\exp\Big(n Q(\hat{b})+ 4 \gamma  \hat{u}^2 (2 A/\tau - (3/2)n)\Big).
\end{equation}
To see why the last claim is true, consider a vector $\mathbf{b}=\{b_v\}_{v\in V}$  that makes the 
largest contribution to the partition function. For every $i$ the sign of $b_i$ is the opposite of the sign 
of $\sum_{u;\{u,i\}\in E}b_u$; otherwise flipping the sign of $b_i$ we increase the contribution. By
Lemma~\ref{lo}, in the optimal choice for $\mathbf{b}=\{b_v\}_{v\in V}$, all have absolute value less than $\hat{u}$. 
Hence  the upper bound in~\eqref{ezz} applies, since $nQ(\hat{b})$ is an upper bound for the 
$\sum_{v\in V} Q(b_v)$  term (using Lemma~\ref{lem:maxb} on each term) and setting all $b_v$'s to $\hat{u}$ is an upper bound for the remaining term.

To finish the proof, it remains to show that $\log T_1-\log T_2\geq 2N^c$ and that the spectral diameter of $J$ is less than $1+N^{-1/2+\varepsilon}$. We start with the first inequality.  We will use the following lower bound on the binomial coefficient
\begin{equation}\label{BICO}
\ln {t\choose t/2+b} \geq t H(1/2 + b/t) - \ln(t+1).
\end{equation}
Using~\eqref{BICO}  and the expression for $T_1$ in \eqref{COMAX}, we have
\begin{equation*}
\log T_1 \geq \log K+ n Q(\hat{b}) - n\log(t+1) + 4 \gamma  \hat{b}^2 (2 A - (3/2)n).
\end{equation*}
Combining with the expression for $T_2$ from \eqref{COTAU} we have
\begin{equation}\label{EDI}
\log T_1 - \log T_2 \geq -2n(\log(t+1)) + 4\gamma \left( 
\hat{b}^2 (2 A - (3/2)n) - \hat{u}^2 (2 A/\tau - (3/2)n) \right).
\end{equation}
We can bound
\begin{equation*}
\begin{split}
& \hat{b}^2 (2 A - (3/2)n) - \hat{u}^2 (2 A/\tau - (3/2)n)
\geq 2A ( \hat{b}^2 - \hat{u}^2/\tau)\\
& = 2A \hat{b}^2 (1-1/\tau) + 2A(\hat{b}^2 - \hat{u}^2)/ \tau
\geq C' n \left( t^{3/2 + 2\delta} + O(t^{3/2+\delta+\delta'}) \right),
\end{split}
\end{equation*}
where $C'$ is a constant depending on $\tau$. Combining with~\eqref{GASY}
we have
$$
4\gamma \left(
\hat{b}^2 (2 A - (3/2)n) - \hat{u}^2 (2 A/\tau - (3/2)n) \right) \geq C' n 
\big( t^{3\delta+\delta'} + o(t^{3\delta+\delta'}) \big),
$$
which implies from~\eqref{EDI} that, for large $n$, $\log T_1 - \log T_2\geq 2n\geq 2N^{c}$ (recall $N=tn=n^{C+1}$ and  $c=\tfrac{1}{C+1}$).

It remains to bound the spectral diameter of the interaction matrix $J$. We can write $J$  as $J_1+J_2$ where $J_1$ is a block diagonal matrix with 
$n$ blocks of size $t\times t$ filled with $\beta$'s and $J_2$ is a matrix with $(v,i)$ entry 
$(u,j)$ entry equal to $-\gamma$ if $\{u,v\}\in E$ (and the remaining entries are zero).
We have, using~\eqref{ALUPE}
\begin{equation*}
\|J_1\|_2 = t\beta \leq 1 + 8t^{-1/2+2\delta}.
\end{equation*}
Note that $J_1$ is a positive semidefinite matrix and hence has spectral diameter bounded  by the 2-norm
(recall that spectral diameter of a symmetric matrix is the difference between the largest and the smallest eigenvalue). We also have, using~\eqref{GASY}, that
\begin{equation*}
\|J_2\|_2 \leq 3t\gamma = (24/9) t^{-1/2+\delta+\delta'} + o(t^{-1/2+\delta+\delta'})
\end{equation*}
and hence the spectral diameter of $J_2$ is bounded by $(48/9) t^{-1/2+\delta+\delta'} + o(t^{-1/2+\delta+\delta'})$.
The two bounds combined yield  that the spectral diameter of $J=J_1+J_2$ is bounded by 
$$1 + 8t^{-1/2+2\delta} + o(t^{-1/2+2\delta}).$$

The matrix $J$ has dimensions $N\times N$ where $N=tn = n^{C+1}$. Hence, the bound
on the spectral diameter of $J$ in terms of $N$ is 
$$
1 + N^{-1/2+2\delta+1/(2(C+1))} + o(N^{-1/2+2\delta+1/(2(C+1))}).
$$
Since $\delta=\eps/6$, $\delta'=\eps/12$, and $C=\lceil 3/\eps\rceil$ we obtain that the spectral diameter of $J$ is $\leq 1+N^{-1/2+\varepsilon}$, as needed.
\end{proof}

\newpage

\bibliographystyle{alpha}
\bibliography{biblio}

\end{document}